\newcommand{\bib}[2]{\frac{\partial {#1}}{\partial {#2}}}
\def\w{{\wedge}} 
\def\div{\mbox{div}}
\theoremstyle{definition}
\newtheorem{pr}{Proposition}[section]
\newtheorem{defn}{Definition}[section]
\newtheorem{theo}{Theorem}[section]
\newtheorem{rem}{Remark}
\newtheorem{cor}{Corollary}[section]
\begin{document}
\title{
Super Finsler Connection of Superparticle \\ 
on Two Dimensional Curved Spacetime
}
\author{Takayoshi Ootsuka}
\email{ootsuka@cosmos.phys.ocha.ac.jp}
\affiliation{Department of Physics, Ochanomizu University, 2-1-1 Otsuka Bunkyo-ku, Tokyo, Japan}
\affiliation{NPO Gakujutsu-Kenkyu Network, Tokyo, Japan}
\author{Muneyuki Ishida}
\email{ishida@phys.meisei-u.ac.jp}
\affiliation{Department of Physics, Meisei University,
 2-1-1 Hodokubo, Hino, Tokyo 191-8506, Japan}
\author{Erico Tanaka}
\email{erico@sci.kagoshima-u.ac.jp}
\affiliation{Department of Mathematics and Computer Science, 
Kagoshima University, 1-21-35 K\={o}rimoto Kagoshima, Kagoshima, Japan}
\author{Ryoko Yahagi}
\email{yahagi@rs.tus.ac.jp}
 \affiliation{Department of Physics, Faculty of Science, 
Tokyo University of Science, 1-3 Kagurazaka, Shinjuku-ku, Tokyo, Japan}
\date{\today}

\begin{abstract}

We analyze the Casalbuoni-Brink-Schwarz superparticle model on a 2-dimensional curved spacetime as a super Finsler metric defined on a (2,2)-dimensional supermanifold. We propose a nonlinear Finsler connection which preserves this Finsler metric and calculates it explicitly. The equations of motion of the superparticle are reconstructed in the form of auto-parallel equations expressed by the super nonlinear connection.

\end{abstract}

\maketitle

\section{Introduction}

Einstein constructed the theory of gravity by considering the 
geodesic equations of a point particle.
Following his line of thought,
a natural way to construct a supergravity model is by the super
Riemannian formulation proposed by Arnowitt and Nath \cite{AN}.
They extended the standard Riemannian manifold to a supermanifold
that contains anticommuting Majorana spinors as coordinate functions.
They defined a connection, a curvature and field equations 
on this supermanifold through almost the same procedure 
as in the Einstein's gravity.
However, in spite of high expectations, the solutions of the field equations did not include 
a superspace with global supersymmetry \cite{Freu}.
Nevertheless, we still think that it is an ideal path to supergravity and 
believe that some modifications to the connection can salvage this method.

The simplest superparticle model was first given by Casalbuoni \cite{Cas1},
motivated by his study on the classical limit of fermion systems, which
was analyzed by Brink and Schwarz later on \cite{BS}.
The relation between the dynamics of this superparticle 
and the supergravity constraint equations 
via twistorial  interpretation was
suggested by Witten \cite{Witt}. 
Though the relation between superparticle and supergravity 
seems quite natural, 
this is the only literature that clearly states it.

In this paper, we consider the Casalbuoni-Brink-Schwarz 
2-dimensional superparticle Lagrangian as a super Finsler metric on a supermanifold 
(for the literature on supermanifold, we refer \cite{Rog, DeW} ).
We extend the nonlinear connection method on Finsler manifold invented by Kozma and Ootsuka \cite{KO} to a super Finsler manifold.
Despite the fact that an explicit calculation of Finsler connection is in general difficult, their technique makes the calculation much easier.
It is also applicable to a degenerate Finsler metric, which is a required property for our super Finsler metric.
In the major literature on Finsler geometry \cite{Mats, Mir, BCS, Ant, BM},
its connection defines parallel transports on the tangent bundle
(line element space).
Our parallel transports stay on the manifold (point space).
This standpoint is called 
point-Finsler approach \cite{IT,KT,KO}, and it is well suited for physical applications.
We do not need the linear part of the standard Finsler connection.
As for superconnection, our connection has a different definition
discussed by Bejancu \cite{Bej} 
and Vacaru and Vicol \cite{VV},
but the same one as difined by DeWitt \cite{DeW}.
We further extend the latter definition to a nonlinear connection so that it can be applied to degenerate super metrics.
Since this resembles Einstein’s theory of general relativity,
and taking the fact that the Casalbuoni-Brink-Schwarz model 
is a particle model with internal degrees of freedom (pseudoparticle)
into account, 
we are certain that it leads a theory of gravity for a
matter with internal degrees of freedom.
We also believe it corresponds to a supergravity without the Rarita-Schwinger field.

In section 2, we give a quick review of the spinor structure
and an analysis of the Casalbuoni-Brink-Schwarz model 
in terms of a super Finsler manifold.
Section 3 is devoted to a definition of a nonlinear Finsler connection on a supermanifold.
In section 4, we express the equations of motion of the superparticle as
auto-parallel equations.

\section{Casalbuoni-Brink-Schwarz model on curved super spacetime}

Casalbuoni-Brink-Schwarz superparticle model was originally defined on a flat spacetime.
We generalize this model to a curved spacetime as \cite{Witt}.
We consider 2-dimensional spacetime for simplicity
and present it as a super Finsler metric $L$ defined
on a (2,2)-dimensional super Finsler manifold $M^{(2,2)}$.
We take the even submanifold $M^{(2,0)}$ of $M^{(2,2)}$ as a 
Lorentzian manifold $(M^{(2,0)},g)$ and assume the Lorentzian metric
has signature $(+,-)$.
The dynamical variables of this model are $x^\mu \, (\mu=0,1)$, $\xi^A \, (A=1,2)$,
where $x^\mu$ represent spacetime even coordinates and $\xi^A$ are
Grassmann odd coordinates which are 
components of Majorana spinors $\xi = | _A \rangle \xi^A$.
The ket $| {}_A \rangle$ denotes spinor basis.
There exists an inner product $B_{AB}$ in the spinor space,
$B_{AB}=B_{BA}, B^2=1$ \cite{Chev, Suz},
which defines a cospinor of $|\xi \rangle$ 
as $\langle \xi | = \xi^A B_{AB} $ so that
$\langle \xi | \xi' \rangle = \xi^A B_{AB} \xi'^B$.
With cospinor basis $\langle ^A | $, the component of
the spinor can be
extracted as $\xi^A = \langle ^A | \xi \rangle$.
The matrix $B_{AB}$ satisfies $B \gamma^a B^{-1} = \, ^t \gamma^a$,
and $\gamma^a$ are the gamma matrices that admit the property
\begin{align}
 (\gamma^a)^A {}_C (\gamma^b)^C {}_B
 + (\gamma^b)^A {}_C (\gamma^a)^C {}_B
 = 2 \eta^{ab} \delta^A {}_B.
\end{align}
Here $a,b=0,1$ stand for the indices of flat spacetime whose metric is $\eta_{ab}$.
We use notations $(\gamma^a)_{AB}:=B_{AC}{(\gamma^a)^C}_B=(\gamma^a)_{BA}$.
For example, we can take gamma matrices and spinor metric
\begin{align}
 (\gamma^0)^A{}_B =
  \begin{pmatrix}
   0 & 1 \\
   1 & 0
  \end{pmatrix},
 \hspace{10mm}
 (\gamma^1)^A{}_B =
  \begin{pmatrix}
   0 & 1 \\
   -1 & 0
  \end{pmatrix},
\end{align}
and $B_{AB} = 
  \begin{pmatrix}
   0 & 1 \\
   1 & 0
  \end{pmatrix}$.

We start with the following Lagrangian
\begin{align}
 & L(x,dx,\xi,d\xi) =\sqrt{g_{\mu\nu}(x) \Pi^\mu \Pi^\nu}, \quad
 \Pi^\mu=dx^\mu + \langle \xi |\gamma^\mu(x) |d\xi\rangle \label{model} \\
 & \gamma^\mu(x):= \gamma^a e^\mu {}_a(x), \quad
 g=\eta_{ab} \theta^a \otimes \theta^b, \quad
 \theta^a={e^a}_\mu(x) dx^\mu, \label{mmodel}
\end{align}
where ${e^a}_\mu(x)$ are zweibeins.
We regard the Lagrangian $L$ as a super Finsler metric 
because it satisfies the properties of super Finsler metric 
described below.

We set $z^I:=(x^\mu, \xi^A)$, and capital Roman letters
starting from $I,J,\cdots$ stands for both spacetime and spinor indices.
$z^I$ and $dz^I$ satisfy the following commutation relations
\begin{align}
\begin{array}{l}
 z^I z^J = (-1)^{|I| |J|} z^J z^I, \\ 
 z^I dz^J = (-1)^{|I| |J|} dz^J z^I,  \\ 
 dz^I dz^J = (-1)^{|I| |J|} dz^J dz^I, 
\end{array}
\hspace{10mm}
 |I|=
 \begin{cases}
  0 \hspace{5mm} (I=\mu), \\
  1 \hspace{5mm} (I=A).
 \end{cases}
\end{align}
We also use the abbreviation $(z,dz)$ for $(z^I,dz^I)=(x^\mu,\xi^A,dx^\mu,d\xi^A)$.
The symbol $d$ is called a total derivative, and $dz^I$ plays a role of 
a coordinate function of the tangent space.
Namely, for a vector field $\displaystyle{v=\overleftarrow{\bib{}{z^I}} v^I}$
on the supermanifold $M^{(2,2)}$, it gives
\begin{align}
 dz^I(v)=v^I.
\end{align}
\begin{defn}
Suppose we have a well-defined differentiable function
$L:D(L) \subset TM^{(2,2)} \to \mathbb{R}$, where 
$D(L)$ is a subbundle of the tangent bundle $TM^{(2,2)}$.
$L$ is called {\it super Finsler metric} when it admits 
the homogeneity condition
\begin{align}
 \quad L(z, \lambda dz)=\lambda L(z,dz), \quad
 \lambda >0. \label{Fhom}
\end{align}
The set $(M^{(2,2)},L)$ is called a {\it supefr Finsler manifold}.
\end{defn}
We do not assume positivity: $L>0$ and regularity:
$\displaystyle \det \left( g_{IJ}(x,dx) \right) \neq 0$ for
$\displaystyle g_{IJ}(z,dz) := \frac{1}{2}\bib{^2 L^2}{dz^I \partial dz^J}$,
since these conditions are too strong for physical applications.
Note that the homogeneity condition implies
\begin{align}
 \bib{L}{dz^I} dz^I = L. \label{Fhom}
\end{align}

\begin{rem}
We call the super Finsler metric given by \eqref{model} and \eqref{mmodel}, 
{\it Casalbuoni-Brink-Schwarz metric}.
\end{rem}

Firstly, we show symmetries of the system
when the Lorentzian manifold $(M^{(2,0)},g)$ is flat,
$L=\sqrt{\eta_{ab} \Pi^a \Pi^b}, \quad 
\Pi^a=dx^a + \langle \xi |\gamma^a |d\xi\rangle$,
which has Poincar{\'e} symmetry and supersymmetry.
These symmetries are written in terms of vector fields on the supermanifold $M^{(2,2)}$.

\begin{defn}
The {\it Lie derivative of $L$ along a vector field $\displaystyle{v=\overleftarrow{\bib{}{z^I}} v^I}$
on the supermanifold }
is defined by 
\begin{align}
 {\cal L}_v L := \frac{\partial L}{\partial z^I} v^I 
 + \frac{\partial L}{\partial dz^I} dv^I.
\end{align} 
\end{defn}
\begin{defn}
A vector field $v$ is said to be a {\it Killing vector field}, when it satisfies
\begin{align}
 {\cal L}_v L = 0.
\end{align} 
\end{defn}
The Killing vector field which corresponds to the Lorentz transformation is
\begin{align}
 v = \overleftarrow{\bib{}{x^a}} \varepsilon^a {}_b x^b
 - \overleftarrow{\bib{}{\xi^A}} s^A {}_B \xi^B ,
 \hspace{10mm} 
 s^A {}_B:= -\frac18 [\gamma^a , \gamma^b]^A {}_B \varepsilon_{ab},
\end{align}
where $\varepsilon_{ab}$ and $s_{AB}$
are arbitrary anti-symmetric tensors,
$\varepsilon_{ab} := \eta_{ac}\varepsilon^c {}_b = - \varepsilon_{ba}$ and
$s_{AB}:=B_{AC} s^C {}_B=-s_{BA}$.
We can check that
\begin{align}
 {\cal L}_v L
 & = \frac{\Pi_a}{L} \left[ \bib{\Pi^a}{x^b} \varepsilon^b {}_c x^c
 + \bib{\Pi^a}{dx^b} \varepsilon^b {}_c dx^c
 - \bib{\Pi^a}{\xi^A} s^A {}_B \xi^B
 - \bib{\Pi^a}{d\xi^A} s^A {}_B d\xi^B
 \right] \\
 & = \frac{\Pi_a}{L} 
 \left[ \varepsilon^a {}_b dx^b
 + (\gamma^a)_{AC} d\xi^C s^A {}_B \xi^B
 - \xi^C (\gamma^a)_{CA} s^A {}_B d\xi^B
 \right] \\
 & = \frac{\Pi_a}{L} 
 \left[ \varepsilon^a {}_b dx^b
 + \varepsilon^a {}_b \xi^C (\gamma^b)_{CA} d\xi^A
 \right]
 = \frac{1}{L} \varepsilon^a {}_b \Pi_a \Pi^b
 =0,
\end{align}
where we used the identity $s^A{}_C (\gamma^a)^C {}_B 
- (\gamma^a)^A {}_C s^C {}_B = \varepsilon^a {}_b (\gamma^b)^A {}_B$.
For translation, we have
\begin{align}
 v = \overleftarrow{\bib{}{x^a}} \varepsilon^a,
\end{align}
with an arbitrary constant $\varepsilon^a$.
Supersymmetry transformation is described by
\begin{align}
 v = \overleftarrow{\bib{}{x^a}} \langle \xi | \gamma^a | \varepsilon \rangle
 + \overleftarrow{\bib{}{\xi^A}} \varepsilon^A
 = : Q_A \varepsilon^A,
 \hspace{10mm} 
 \frac12 \{Q_A,Q_B\} = \overleftarrow{\bib{}{x^a}} (\gamma^a)_{AB},
\end{align}
with an arbitrary Grassmann number $\varepsilon^A$.

Secondly, we derive the equations of motion of the model \eqref{model}
when $(M^{(2,0)},g)$ is not flat.
For convenience, we rewrite \eqref{model} using zweibeins.
\begin{align}
 L=\sqrt{\eta_{ab} \Pi^a \Pi^b}, \quad
 \Pi^a={\theta^a} + \langle \xi |\gamma^a |d\xi\rangle
 ={e^a}_\mu(x) \Pi^\mu. \label{zwei}
\end{align}
The action integral is given by the integration of the Finsler metric along 
an oriented curve $\bm{c}$ on $M^{(2,2)}$,
\begin{align}
 {\cal A}[\bm{c}]:=\int_{\bm{c}}L
 = \int_{t_0}^{t_1} c^* L
 = \int_{t_0}^{t_1}
 L\left(c^*z^I,c^*dz^I \right)
 = \int_{t_0}^{t_1}
 L\left(z^I (t),\frac{dz^I (t)}{dt}\right)dt,
\end{align}
where a map $c:t \in \mathbb{R} \mapsto c(t) \in M^{(2,2)}$ is 
a parametrization of the curve $\bm{c}$, and $c^* L$ represents
the pullback of $L$ by the map $c$.
The variation of the action is given by
\begin{align}
 \delta {\cal A}[\bm{c}] =\int_{\bm{c}} \delta L,
\end{align}
where
\begin{align}
 \delta L&=\frac{\eta_{ab}\Pi^a}{L}\delta \Pi^b
 =\frac{\Pi_a}{L} \left( \delta \left({e^a}_\mu dx^\mu\right)
 + \langle \delta \xi | \gamma^a | d\xi \rangle
 + \langle \xi | \gamma^a | d\delta \xi \rangle \right)
 \nonumber \\
 &=d\left\{\frac{\Pi_a}{L}{e^a}_\mu \delta x^\mu
         +\frac{\Pi_a}{L} \langle \xi | \gamma^a | \delta\xi \rangle
    \right\}
 +\frac{\Pi_a}{L}\partial_\mu {e^a}_\nu dx^\nu \delta x^\mu
 +\langle \delta \xi | \frac{\Pi_a \gamma^a}{L} |d\xi \rangle
 \nonumber \\
 & 
 -d\left(\frac{\Pi_\mu}{L}\right)\delta x^\mu 
 -d\left(\frac{\Pi_a}{L}\right) \langle \xi | \gamma^a | \delta \xi \rangle
 - \langle d\xi | \frac{\Pi_a \gamma^a}{L} | \delta \xi \rangle.
\end{align}
The Euler-Lagrange equations are extracted from the $\delta x^\mu$
part and $\delta \xi$ part:
\begin{align}
 0&=c^* \left[\frac{\Pi_a}{L}\partial_\mu e^a {}_\nu dx^\nu -d\left(\frac{\Pi_\mu}{L}\right) \right], \label{delx}
 \\
 0&=c^* \left[\frac{2 \Pi_a \gamma^a}{L} | d\xi \rangle
 +d\left(\frac{\Pi_a \gamma^a}{L}\right) | \xi \rangle\right]. \label{delxi}
\end{align}
These equations have terms including $d^2 z^I$, whose pullback by $c^*$ is given by
\begin{align}
 c^* d^2z^I = d \left( c^* dz^I \right) 
 = d \left( \frac{dz^I(t)}{dt} dt \right) 
 = \frac{d^2z^I(t)}{dt^2} (dt)^2.
\end{align}
Further on, we will omit the pullback symbol $c^*$ for notational simpilicity. 
We can rewrite \eqref{delx} as
\begin{align}
 0&=\frac{\Pi_b}{L}{e_a}^\mu \partial_\mu {e^b}_\nu dx^\nu
 -{e_a}^\mu d \left(\frac{\Pi_\mu}{L}\right)
 =\frac{\Pi_b}{L} e_a({e^b}_\nu) dx^\nu
 -d \left(\frac{{e_a}^\mu \Pi_\mu}{L}\right)
 +\frac{\Pi_\mu}{L}d{e_a}^\mu 
 \nonumber \\
 &=\frac{\Pi_b}{L}{\cal L}_{e_a}\theta^b-\frac{\Pi_\nu}{L}d{\cal L}_{e_a}x^\nu
 +\frac{\Pi_\mu}{L}d{e_a}^\mu-d\left(\frac{\Pi_a}{L}\right)
 =\frac{\Pi_b}{L}{\cal L}_{e_a}\theta^b-d\left(\frac{\Pi_a}{L}\right)
 \nonumber \\
 & =\frac{\Pi_b}{L} \iota_{e_a} \div \theta^b-d\left(\frac{\Pi_a}{L}\right),
\end{align}
where $\div$ is the exterior derivative of a 1-form,
$\displaystyle \div \theta^a := \frac12 \left( \partial_\mu e^a {}_\nu
- \partial_\nu e^a {}_\mu \right) dx^\mu \w dx^\nu$, 
which gives a 2-form, 
and $\iota_{e_a}$ the interior product.
With the above equation, \eqref{delxi} becomes
\begin{align}
 \frac{1}{L}\Pi_a \gamma^a | d\xi \rangle
 =-\frac1{2L}\Pi_b \iota_{e_a} \div \theta^b 
 \gamma^a | \xi \rangle. \label{mix}
\end{align}
Since the above equation is a first-order differential equation of $| \xi \rangle$,
it becomes a constraint on the supermanifold.
Multiplying it by $\displaystyle \frac{\Pi_c \gamma^c}{L}$, we have 
\begin{align}
 \frac{1}{L^2}\Pi_a \Pi_c \gamma^c \gamma^a |d\xi\rangle
 =\frac{1}{L^2}\Pi_a \Pi_c \left(\gamma^{ca}+ \eta^{ca}\right) |d\xi\rangle
 =|d\xi\rangle
 =-\frac{1}{2L^2} \Pi_b \Pi_c \gamma^c \gamma^a |\xi\rangle \iota_{e_a} \div \theta^b,
 \label{dxi0}
\end{align}
where $\displaystyle \gamma^{ca} 
:= \frac12 (\gamma^c \gamma^a-\gamma^a \gamma^c)$.
In the second equality, we used $\Pi_a \Pi_c \eta^{ca} = L^2$.
From the identity
\begin{align}
 \langle \xi |\gamma^n \gamma^c \gamma^a |\xi\rangle 
 =\langle \xi |\gamma^n \left(\gamma^{ca}+ \eta^{ca}\right)|\xi\rangle
 =\varepsilon^{ca}\langle \xi |\gamma^n \gamma^{01} |\xi\rangle
 = \varepsilon^{ca} \left(
 \eta^{n0} \langle \xi |\gamma^1 |\xi\rangle
 - \eta^{n1} \langle \xi |\gamma^0 |\xi\rangle
 \right),
\end{align}
and $\langle \xi |\gamma^a |\xi\rangle =0$,
we have
\begin{align}
 \langle \xi |\gamma^n |d\xi\rangle
 =-\frac{1}{2L^2} \Pi_b \Pi_c \langle \xi |\gamma^n \gamma^c \gamma^a |\xi\rangle 
 \iota_{e_a} \div \theta^b=0. \label{constraint}
\end{align}
Using together the definition of torsion
\begin{align}
 \displaystyle T^a=\div \theta^a + \omega^a {}_b \w \theta^b
 =\frac12 T^a {}_{bc} \theta^b \w \theta^c,
\end{align}
where $\omega^a {}_b$ is the spin connection,
equation \eqref{dxi0} becomes
\begin{align}
 |d\xi\rangle
 = -\frac{\Pi_b\Pi_c}{2L^2}
 \left(\iota_{e_a}T^b-{\omega^b}_{na}\theta^n+{\omega^b}_{an}\theta^n
 \right)\gamma^{c}\gamma^a|\xi\rangle
 =-\frac{\theta_b \theta_c}{2L^2}\left(\iota_{e_a}T^b-{\omega^b}_{na}\theta^n+{\omega^b}_{a}
 \right) \left(\gamma^{ca}+\eta^{ca}\right)|\xi \rangle.
\end{align}
The right hand side is calculated with an anti-symmetric tensor 
$\varepsilon^{ab} \, (\varepsilon^{01} = 1)$ as, 
\begin{align}
 &\theta_b \theta_c \iota_{e_a}T^b \left(\gamma^{ca}+\eta^{ca}\right)
 = - L^2 \theta_b T^b {}_{01} \gamma^{01}, \\
 &-\varepsilon^{ca}\theta^b \theta_c \theta^n \omega_{bna}
 +\varepsilon^{ca} \theta^b \theta_c \omega_{ba}
 =\{(\theta^0)^2-(\theta^1)^2 \}\omega_{01}
 =\Pi_a \Pi^a \omega_{01}=L^2 \omega_{01},
 \\
 &-\theta^b \theta^a \theta^n \omega_{bna}+\theta^b \theta^a \omega_{ba}
 =-\theta^b \theta^a \theta^n \omega_{bna}=0,
\end{align}
to result in 
\begin{align}
 |d\xi\rangle
 = \frac{1}{2} \left( \theta^b T_b {}_{01} - \omega_{01} \right)
 \gamma^{01}|\xi\rangle
 = - \frac{1}{2} \left( K_{01b} \theta^b + \omega_{01} \right)
 \gamma^{01}|\xi\rangle
 = - \frac{1}{4} \left( K_{abc} + \omega_{abc} \right)
 \gamma^{ab} \theta^c|\xi\rangle,
\end{align}
where we denoted the contorsion
\begin{align}
 K_{abc}:=\frac12 \left(T_{abc}+T_{bca}-T_{cab}\right).
\end{align}
Applying abbreviations such as
\begin{align}
 \hat{\omega}_c := \frac12 (K_{abc} + \omega_{abc}) \gamma^{ab},
 \hspace{10mm}
 \hat{\omega} := 
 (K_c + \omega_c) \theta^c
 = \frac12 (K_{abc} + \omega_{abc}) \gamma^{ab} \theta^c,
\end{align}
we obtain a simple expression
\begin{align}
 |d\xi\rangle
 + \frac{1}{2} \hat{\omega} |\xi\rangle = 0. \label{dxi}
\end{align}
With \eqref{constraint}, the equation \eqref{delx} becomes 
\begin{align}
 0&=\frac{\theta_a}{L}\partial_\mu e^a {}_\nu dx^\nu -d\left(\frac{g_{\mu\nu} dx^\nu}{L}\right)
 \nonumber \\
 & = \frac{1}{2L} \partial_\mu g_{\alpha\beta} dx^\alpha dx^\beta
 -d\left(\frac{g_{\mu\nu} dx^\nu}{L}\right).
 \label{eom}
\end{align}
After considering the pullback by $c^*$, this is exactly the same as the equation of motion of a relativistic free particle
on a Lorentzian manifold.
The Casalbuoni-Brink-Schwarz superparticle can be 
identifined as a relativistic particle with spin
obeying equation \eqref{dxi} as the internal degree of freedom.

\section{Nonlinear Finsler Connection on A Supermanifold}

In this section we will define a connection on a supermanifold 
which expresses naturally the geodesics of a superparticle. 
For this purpose we follow the definition given by Kozma and Ootsuka \cite{KO}. 
In their formulation, the Berwald connection is redefined as a nonlinear connection directly on point-Finsler space, and extended also to comprise the singular case. Such definition is advantageous for our purpose to consider the generalization to a supermanifold.  
We define a nonlinear generalization of the cotangent bundle $NT^* M^{(2,2)}$ as 
\begin{align}
 T^* M^{(2,2)} \subset NT^* M^{(2,2)} := \left\{
 a(z, dz) \, | \, a(z, \lambda dz) = \lambda a(z, dz), \lambda>0
\right\}.
\end{align}
A nonlinear 1-form $a \in NT^* M^{(2,2)}$ is a function of
$z^I$ and $dz^I$ and defines a map
$a: \Gamma(TM^{(2,2)}) \to C^\infty(M^{(2,2)})$,
$a(X) := a(z, dz(X)), \, X \in TM^{(2,2)}$ which 
satisfies a homogeneity condition $a(\lambda X) = \lambda a(X)$. 
It is not linear because 
$a(X+Y) \neq a(X) + a(Y)$.
\begin{defn} \label{connection}
Let $\Gamma(T^*M^{(2,2)})$ be a section of the cotangent bundle $T^*M^{(2,2)}$ on a supermanifold $M^{(2,2)}$
and $\nabla: \Gamma(T^*M^{(2,2)}) \to \Gamma(T^*M^{(2,2)} \otimes NT^*M^{(2,2)})$
a map such that satisfies
\begin{align}
 & \nabla dz^I = -N^I {}_J \otimes dz^J, \\
 & N^I {}_J (z,\lambda dz) = \lambda N^I {}_J (z,dz), \label{Nhom}\\
 & \bib{N^I {}_J}{dz^K} = (-1)^{|J| |K|} \bib{N^I {}_K}{dz^J}, \label{Ncond} \\
 & \bib{L}{z^I} = \bib{L}{dz^J} N^J {}_{I}. \label{Ndef}
\end{align}
Then, $N^I {}_J$ is called {\it a nonlinear super Finsler connection} on $M^{(2,2)}$.
\end{defn}
Unlike the linear connection, in general, 
$N^I {}_J$ is not linear in $dz^I$; namely
$N^I {}_J (z,dz) \neq N^I {}_{KJ} (z) dz^K$.
The condition \eqref{Nhom} means that the connection $N^I {}_J$
is degree 1 homogeneous.
For a nonlinear connection, the condition \eqref{Ncond} does not mean
the torsion is zero, while for the linear case, 
it becomes a torsion-free condition.
The last condition \eqref{Ndef} implies that the connection preserves 
the super Finsler metric:
$\displaystyle \nabla L:= \bib{L}{z^I} \nabla z^I + \bib{L}{dz^I} \nabla dz^I =0$.
We define the quantities $\displaystyle G^I := \frac12 N^I {}_J dz^J$ 
and call them {\it super Berwald functions}.
They are degree 2 homogeneity functions with respect to $dz^I$:
$G^I(z,\lambda dz)=\lambda^2 G^I(z,dz)$.
From this homogeneity condition, we have
\begin{align}
 \bib{G^I}{dz^J}
 = \frac12 \left(N^I {}_J + (-1)^{|J||K|} \bib{N^I {}_K}{dz^J} dz^K \right)
 = N^I {}_J. \label{GN}
\end{align}
\begin{rem}
The nonlinear connection defined above satisfies the linearity 
$\nabla(\rho_1+\rho_2)=\nabla \rho_1+\nabla \rho_2$
for sections $\rho_1$ and $\rho_2$ of $T^*M^{(2,2)}$,
which fails for the sections of $TM^{(2,2)}$.
Moreover, for physical problems, covariant quantities appear
more often than contravariant quantities do.
For these reasons, we proposed the definition \ref{connection}.
However, using such connection, we can also define 
the nonlinear connection 
for a vector field $\displaystyle{X=\overleftarrow{\bib{}{z^I}} X^I}$, by
\begin{align}
 \nabla X := \left( dX^I + N^I {}_{J}(z, dz(X) ) dz^J \right) 
 \otimes \overleftarrow{\bib{}{z^I}}.
\end{align}
Here $N^I {}_{J}(z^K, dz^K (X) )= N^I {}_{J}(z^K, X^K )$.
The connection above defines a map
$ \nabla: \Gamma(TM^{(2,2)}) \to 
\Gamma(T^*M^{(2,2)} \otimes TM^{(2,2)})$ with
$\nabla(\lambda X)=\lambda \nabla X, \, \lambda>0$ and
$\nabla(X+Y) \neq \nabla X + \nabla Y$.
\end{rem}

For the superparticle model, we have the following results on 
a nonlinear connection. 
\begin{theo}
Let $L$ be the Casalbuoni-Brink-Schwarz metric, 
then the super Berwald functions for $L$ and constraints are given by
\begin{align}
 & G^\mu
 = \frac12 \Gamma^\mu {}_{\alpha\beta} dx^\alpha dx^\beta
 + \frac{1}{L} \Pi^\mu \langle \mathcal{C} | d \xi \rangle
 - \frac{1}{2L^2} \Pi_a d e^a {}_\nu dx^\nu 
 \langle \xi | \gamma^\mu | d \xi \rangle
 - \frac12 g^{\mu\beta} \iota_{\partial_\beta} \div \theta^a
 \langle \xi | \gamma_a | d \xi \rangle
 + \langle \xi | \gamma^\mu | \lambda \rangle, \label{Gmu} \\
 & G^A
 = \frac{1}{2L^2} \Pi_a de^a {}_\nu dx^\nu d\xi^A 
 - \lambda^A, \label{GA} \\
 & \mathcal{C}_A := M_A - M_\mu (\gamma^\mu)_{AB} \xi^B=0,
\end{align}
where $\lambda^A$ are arbitrary functions of $(z^I,dz^I)$
which are second order homogeneous with respect to $dz^I$, and
\begin{align}
 M_\mu
 & := \frac{1}{2L} \left\{
 - \Pi_a \partial _\mu e^a {}_\nu dx^\nu
 + \left( \eta_{ab} - \frac{1}{L^2} \Pi_a \Pi_b \right) e^b {}_\mu de^a {}_\nu dx^\nu
 + \Pi_a de^a {}_\mu \right\},\\
 M_A
 & := \frac{1}{2L} \left\{
 2 \Pi_a \langle d\xi | \gamma^a | _A \rangle
 + \left( \eta_{ab} - \frac{1}{L^2} \Pi_a \Pi_b \right) de^a {}_\mu dx^\mu
 \langle \xi | \gamma^b | _A \rangle \right\}.
\end{align}
\end{theo}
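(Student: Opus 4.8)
The plan is to solve the four conditions of Definition \ref{connection} directly for the Casalbuoni-Brink-Schwarz metric \eqref{zwei}. The observation that streamlines everything is that only the metric-compatibility \eqref{Ndef} is substantive. Indeed, if one posits super Berwald functions $G^I$ that are degree-$2$ homogeneous in $dz^I$ and \emph{defines} $N^I{}_J := \bib{G^I}{dz^J}$, then the homogeneity \eqref{Nhom} follows from Euler's relation, the symmetry \eqref{Ncond} is merely the graded equality of mixed second derivatives, and the identity \eqref{GN} together with $G^I=\tfrac12 N^I{}_J dz^J$ is automatic. Thus the whole problem collapses to finding the most general degree-$2$ homogeneous $G^I$ satisfying $\bib{L}{z^I}=\bib{L}{dz^J}\,\bib{G^J}{dz^I}$.

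First I would compute the two gradients of $L=\sqrt{\eta_{ab}\Pi^a\Pi^b}$. Writing $L=\sqrt{F}$ with $F=\Pi_a\Pi^a$ gives $\bib{L}{dz^J}=\frac{\Pi_a}{L}\bib{\Pi^a}{dz^J}$ and $\bib{L}{z^I}=\frac{\Pi_a}{L}\bib{\Pi^a}{z^I}$, where $\bib{\Pi^a}{dx^\nu}=e^a{}_\nu$, $\bib{\Pi^a}{d\xi^B}=\langle\xi|\gamma^a| _B\rangle$, $\bib{\Pi^a}{x^\mu}=\partial_\mu e^a{}_\nu\,dx^\nu$, and $\bib{\Pi^a}{\xi^A}=\langle _A|\gamma^a|d\xi\rangle$ (all up to Grassmann signs to be tracked carefully). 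Substituting into \eqref{Ndef} shows that $N^J{}_I$ enters only through the single contraction $\Pi_a\,\bib{\Pi^a}{dz^J}\,N^J{}_I$, i.e. through the projection of the connection onto the covector $\Pi_a$. This is the analytic face of the degeneracy of $g_{IJ}$ along the odd directions, and it is precisely why the solution cannot be unique.

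Next I would make an ansatz for $G^\mu$ and $G^A$ with the expected structure — a Levi-Civita piece $\tfrac12\Gamma^\mu{}_{\alpha\beta}dx^\alpha dx^\beta$, spinor-bilinear corrections built from $\langle\xi|\gamma^a|d\xi\rangle$ and $de^a{}_\mu$, and an undetermined degree-$2$ tail $\lambda^A$ lying in the kernel — and fix the determined part by matching the $\Pi$-projection of \eqref{Ndef}. The orthogonal projector $\eta_{ab}-\tfrac1{L^2}\Pi_a\Pi_b$ applied to the compatibility residual is exactly what packages the data into $M_\mu$ and $M_A$; the part of the equations that cannot be absorbed into the kernel freedom $\lambda^A$ should then reduce to the single solvability condition $\mathcal{C}_A=M_A-M_\mu(\gamma^\mu)_{AB}\xi^B=0$. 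Consistency with Section 2 serves as a check: on the constraint surface the term $\frac1L\Pi^\mu\langle\mathcal{C}|d\xi\rangle$ drops out of \eqref{Gmu}, and the auto-parallel equations $\frac{d^2z^I}{dt^2}+2G^I=0$ should reproduce \eqref{eom} and \eqref{dxi}.

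The hard part will be the degeneracy bookkeeping: since $g_{IJ}$ cannot be inverted, one must cleanly separate the component of \eqref{Ndef} that genuinely determines $G^I$ from the component that is pure gauge (the $\lambda$-freedom, surfacing as $\langle\xi|\gamma^\mu|\lambda\rangle$ in \eqref{Gmu} and $-\lambda^A$ in \eqref{GA}), and then recognize the remainder as the constraint $\mathcal{C}_A=0$ rather than as an equation for $G$. Compounding this, every spinor step carries Grassmann signs and leans on the $\gamma$-matrix identities of Section 2 (notably $\gamma^c\gamma^a=\gamma^{ca}+\eta^{ca}$ and $\langle\xi|\gamma^a|\xi\rangle=0$), so the real labor is an honest, sign-correct reduction of the orthogonally projected residual into the stated $M_\mu$ and $M_A$.
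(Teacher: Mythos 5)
Your opening reduction is correct and is in fact the same reduction the paper makes, stated in the converse direction: the paper derives $\bib{G^I}{dz^J}=N^I{}_J$ (its \eqref{GN}) from \eqref{Nhom} and \eqref{Ncond}, while you observe that starting from a degree-2 homogeneous $G^I$ and setting $N^I{}_J:=\bib{G^I}{dz^J}$ makes \eqref{Nhom}, \eqref{Ncond} and the Euler relation automatic, so that only \eqref{Ndef} is substantive. Your identification of the gauge freedom (surfacing as $\langle\xi|\gamma^\mu|\lambda\rangle$ and $-\lambda^A$) and of $\mathcal{C}_A=0$ as a solvability condition also matches the paper's conclusions. So the skeleton is the paper's skeleton.

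The gap is in the middle, where all the content lives: ``make an ansatz with the expected structure and fix the determined part by matching'' is not yet a procedure. Since \eqref{Ndef} constrains $\bib{G^J}{dz^I}$ only through its contraction with $\bib{L}{dz^J}$, imposing it on an ansatz with unknown coefficient functions is a priori a system of first-order PDEs in the $dz$-variables, and your plan does not explain (i) how the determined part is produced rather than guessed, (ii) why the residual freedom is exactly $l^I{}_A\lambda^A$ and nothing more, nor (iii) why $\mathcal{C}_A=0$ is forced rather than merely sufficient. The paper's device settles all three at once, and it is the idea your proposal is missing. First, contract \eqref{Ndef} with $dz^I$ from the right; using $N^J{}_I\,dz^I=2G^J$ this collapses the system to the single pointwise-linear equation $2\,\bib{L}{dz^I}G^I=\bib{L}{z^I}dz^I$, whose general solution is the particular solution $\frac12\bigl(\bib{L}{z^J}dz^J\bigr)\frac{dz^I}{L}$ plus the kernel parametrization $l^I{}_\mu\lambda^\mu+l^I{}_A\lambda^A$ of \eqref{sol1}, with the gauge choice \eqref{pilam} available because of the linear relation $l^I{}_\mu\Pi^\mu=l^I{}_B\,d\xi^B$ among the five vectors. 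Second, substitute this family back into the full \eqref{Ndef}: every term containing $\bib{\lambda^\mu}{dz^J}$ or $\bib{\lambda^A}{dz^J}$ is annihilated because $\bib{L}{dz^I}\,l^I{}_\mu=\bib{L}{dz^I}\,l^I{}_A=0$, and the term $\bib{l^I{}_A}{dz^J}\lambda^A$ vanishes since $l^I{}_A$ is $dz$-independent. The would-be PDE system therefore degenerates into the pointwise linear-algebra problem $\lambda^\mu L_{\mu J}=M_J$ of \eqref{M}: its even part is solved uniquely by the bordered-matrix inverse \eqref{lamb0}, which is what produces the Christoffel and zweibein terms of \eqref{Gmu} by computation rather than by ansatz, while its odd part, with $\lambda^\mu$ already fixed, is not an equation for the connection at all but exactly the constraint $\mathcal{C}_A=M_A-M_\mu(\gamma^\mu)_{AB}\xi^B=0$. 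Without this mechanism, your matching step can at best verify the stated formulas as one solution on the constraint surface; it cannot deliver the generality claim (arbitrariness $=\lambda^A$) or the necessity of the constraint, both of which the theorem asserts.
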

\begin{proof}
Firstly, we multiply \eqref{Ndef} by $dz^I$ from the right and obtain
\begin{align}
 \bib{L}{z^I} dz^I
 = \bib{L}{dz^J} N^{J} {}_{I} dz^I
 = 2 \bib{L}{dz^I} G^I.
\end{align}
Considering the homogeneity condition \eqref{Fhom},
we find a particular solution for $G^I$:
\begin{align}
 G^I = \frac12 \left( \bib{L}{z^J} dz^J \right) \frac{dz^I}{L}.
\end{align}
Since we are considering (2,2)-dimensional supermanifold,
we need 4 independent vectors to span the general solution.
We choose vectors 
\begin{align}
 l^I {}_\mu
 : = \delta^I {}_\mu - \frac{\Pi_\mu}{L^2} dz^I, \label{sol1}
 \hspace{10mm}
 l^I {}_B : =
  \begin{pmatrix}
   l^\mu {}_B \\
   l^A {}_B
  \end{pmatrix}
 : =
  \begin{pmatrix}
   \langle \xi | \gamma^\mu | _B \rangle \\
   - \langle ^A | _B \rangle
  \end{pmatrix} 
\end{align}
for the basis.
It is easy to check that these vectors vanish when they are contracted 
with $\displaystyle \bib{L}{dz^I}$ from the left.
Thus, we can write the general solution as
\begin{align}
 G^I = \frac12 \left( \bib{L}{z^J} dz^J \right) \frac{dz^I}{L}
 + l^I {}_\mu \lambda^\mu + l^I {}_A \lambda^A, \label{Gdef}
\end{align}
where $\lambda^\mu, \lambda^A$ are arbitrary functions of $(z^I,dz^I)$,
and are second order homogeneous with respect to $dz^I$.
Since there are 5 non-independent vectors $(dz^I, l^I {}_\mu, l^I {}_B)$ in the solution, we can choose
one additional condition for the coefficients $\lambda^\mu$. 
For convenience, we set
\begin{align}
 \Pi_\mu \lambda^\mu = 0. \label{pilam}
\end{align}

For further calculation, we define 
\begin{align}
 L_{IJ}
 & := \bib{^2 L}{dz^I \partial dz^J}
 := L \overleftarrow{\bib{}{dz^I}} \overleftarrow{\bib{}{dz^J}}, \\
 L_{\mu\nu}
 & = \bib{^2 L}{dx^\mu \partial dx^\nu}
 = \frac{1}{L} \left( g_{\mu\nu} - \frac{1}{L^2} \Pi_\mu \Pi_\nu \right)
 = L_{\nu\mu}, \\
 L_{A\nu}
 & = \bib{^2 L}{d\xi^A \partial dx^\nu}
 = \langle \xi | \gamma^\alpha L_{\alpha\nu} | _A \rangle
 = L_{\nu A}, \\ 
 L_{AB}
 & = \bib{^2 L}{d\xi^A \partial d\xi^B}
 = \langle \xi | \gamma^\alpha | _A \rangle L_{\alpha\beta}
 \langle \xi | \gamma^\beta | _B \rangle
 = - L_{BA}.
\end{align}
From \eqref{GN}, \eqref{sol1}, and \eqref{Gdef} we have
\begin{align}
 N^I {}_J
 & = \bib{G^I}{dz^J} \notag \\
 & = \frac12 \left( \bib{L}{z^K} dz^K \right)
 \left( \frac{1}{L} \delta^I {}_J - \frac{dz^I}{L^2} \bib{L}{dz^J} \right)
 + \frac12 (-1)^{|I||J|} \left( \bib{L}{z^J} + \bib{^2L}{dz^J \partial z^K} dz^K\right)
 \frac{dz^I}{L} \notag \\
 & \hspace{5mm}
 + l^I {}_\mu \bib{\lambda^\mu}{dz^J}
 + \bib{l^I {}_\mu}{dz^J} \lambda^\mu
 + l^I {}_A \bib{\lambda^A}{dz^J}
 + (-1)^{|J|} \bib{l^I {}_A}{dz^J} \lambda^A.
\end{align}
The last term will vanish due to \eqref{sol1}.
When this is multiplied by $\displaystyle \bib{L}{dz^I}$, only few terms remain:
\begin{align}
 \bib{L}{dz^I} N^{I} {}_{J}
 = \frac12 \left( \bib{L}{z^J} + \bib{^2L}{dz^J \partial z^K} dz^K \right)
 - \lambda^\mu L_{\mu J}. \label{seki}
\end{align}
The relation \eqref{Ndef} says that the left hand side
of \eqref{seki} is equal to 
$\displaystyle \bib{L}{z^J}$, which leads to
\begin{align}
 \lambda^\mu L_{\mu J}
 = \frac12 \left( -\bib{L}{z^J} + \bib{^2L}{dz^J \partial z^K} dz^K \right)
 =: M_J \label{M}.
\end{align}
We separate the above equation into two pieces.
For $J=A$, we leave it as a constraint 
$\mathcal{C}_A = M_A - \lambda^\mu L_{\mu A}=0$.
For $J=\mu$, we rewrite it into a matrix equation
\begin{align}
  \begin{pmatrix}
   L_{\mu\nu} & \frac{\Pi_\mu}{L} \\
   \frac{\Pi_\nu}{L} & 0
  \end{pmatrix}
  \begin{pmatrix}
   \lambda^\nu \\
   0
  \end{pmatrix}
 =
  \begin{pmatrix}
   M_\mu \\
   0
  \end{pmatrix}.
\end{align}
The second row is the condition \eqref{pilam}.
This matrix has the inverse matrix
\begin{align}
  \begin{pmatrix}
   \tilde{L}^{\mu\nu} & \frac{\Pi^\mu}{L} \\
   \frac{\Pi^\nu}{L} & 0
  \end{pmatrix},
 \hspace{10mm}
 \tilde{L}^{\mu\nu} := L g^{\mu\nu} - \frac{\Pi^\mu \Pi^\nu}{L},
\end{align}
and we have
\begin{align}
  \begin{pmatrix}
   \lambda^\mu \\
   0
  \end{pmatrix}
 =
  \begin{pmatrix}
   \tilde{L}^{\mu\nu} M_\nu \\
   \frac{1}{L} \Pi^\nu M_\nu
  \end{pmatrix}. \label{lamb0}
\end{align}
With this $\lambda^\mu$, we obtain
\begin{align}
 G^I = \frac12 \left( \bib{L}{z^J} dz^J \right) \frac{dz^I}{L}
 + l^I {}_\mu \tilde{L}^{\mu\nu} M_\nu + l^I {}_A \lambda^A. \label{sol2}
\end{align}
The second row of \eqref{lamb0} automatically holds. 
This can be checked by considering the constraint
\begin{align}
 0 = \mathcal{C}_A
 = M_A - \lambda^\mu \langle \xi | \gamma^\alpha L_{\alpha\mu} | _A \rangle
 = M_A - M_\alpha \langle \xi | \gamma^\alpha | _A \rangle,
\end{align}
where \eqref{M} is used for the last equality.
Taking the contraction with $d\xi^A$, we get
\begin{align}
 0
 = \langle \mathcal{C} | d\xi \rangle
 = M_A d\xi^A - M_\alpha (\Pi^\alpha - dx^\alpha)
 = M_I dz^I - M_\alpha \Pi^\alpha
 = - M_\alpha \Pi^\alpha. \label{cdxi}
\end{align}
For the last equality, we used \eqref{Fhom} to obtain
\begin{align}
 M_I dz^I
 = \frac12 \left( -\bib{L}{z^I} dz^I + \bib{^2L}{dz^I \partial z^K} dz^K dz^I \right)
 = 0.
\end{align}
The explicit expressions of $M_I$ are calculated straightforward.

Using the relation between the Christoffel symbol and zweibeins, 
\begin{align}
 \Gamma_{\mu\alpha\beta} dx^\alpha dx^\beta
 & = \frac12 (\partial_\alpha g_{\mu\beta} + \partial_\beta g_{\alpha\mu}
 - \partial_\mu g_{\alpha\beta}) dx^\alpha dx^\beta \notag \\
 & = \eta_{ab} (e^b {}_\beta \partial_\alpha e^a {}_\mu
 + e^a {}_\mu \partial_\alpha e^b {}_\beta
 - e^a {}_\alpha \partial_\mu e^b {}_\beta) dx^\alpha dx^\beta,
\end{align}
we obtain
\begin{align}
 2 L M_\mu
 = \Gamma_{\mu\alpha\beta} dx^\alpha dx^\beta
 - \frac{1}{L^2} \Pi_a \Pi_b e^b {}_\mu de^a {}_\nu dx^\nu
 - \langle \xi | \gamma_a | d \xi \rangle \iota_{\partial_\mu} \div \theta^a.
\end{align}
With the above relation and
\begin{align}
 \bib{L}{z^J} dz^J
 = \frac1L \Pi_a de^a {}_\mu dx^\mu,
\end{align}
the even part of the super Berwald function $G^\mu$ becomes
\begin{align}
 G^\mu
 & = \frac12 \left( \bib{L}{z^J} dz^J \right) \frac{dx^\mu}{L}
 + l^\mu {}_\alpha \tilde{L}^{\alpha\beta} M_\beta + l^\mu {}_A \lambda^A \notag \\
 & = \frac{1}{2L^2} \Pi_a de^a {}_\nu dx^\nu dx^\mu
 + \left( L g^{\mu\beta} - \frac{1}{L} \Pi^\mu \Pi^\beta \right) M_\beta
 + \langle \xi | \gamma^\mu | \lambda \rangle \notag \\
 & = \frac{1}{2L^2} \Pi_a de^a {}_\nu dx^\nu dx^\mu
 + \frac12 \Gamma^\mu {}_{\alpha\beta} dx^\alpha dx^\beta
 - \frac{1}{2L^2} \Pi_a de^a {}_\nu dx^\nu \Pi^\mu
 - \frac12 g^{\mu\beta} \iota_{\partial_\beta} \div \theta^a
 \langle \xi | \gamma_a | d \xi \rangle \notag \\
 & \hspace{5mm} - \frac{1}{L} \Pi^\mu \Pi^\beta M_\beta
 + \langle \xi | \gamma^\mu | \lambda \rangle \notag \\
 & = \frac12 \Gamma^\mu {}_{\alpha\beta} dx^\alpha dx^\beta
 + \frac{1}{L} \Pi^\mu \langle \mathcal{C} | d \xi \rangle
 - \frac{1}{2L^2} \Pi_a d e^a {}_\nu dx^\nu 
 \langle \xi | \gamma^\mu | d \xi \rangle
 - \frac12 g^{\mu\beta} \iota_{\partial_\beta} \div \theta^a
 \langle \xi | \gamma_a | d \xi \rangle
 + \langle \xi | \gamma^\mu | \lambda \rangle.
\end{align}
In the last line, we used the identity \eqref{cdxi}.
For the odd part $G^A$, the relation
\begin{align}
 l^A {}_\alpha \tilde{L}^{\alpha\beta}
 = -\frac{1}{L^2} \Pi_\alpha \tilde{L}^{\alpha\beta} d\xi^A
 = 0
\end{align}
assures
\begin{align}
 G^A
 & = \frac12 \left( \bib{L}{z^J} dz^J \right) \frac{d\xi^A}{L}
 + l^A {}_\alpha \tilde{L}^{\alpha\beta} M_\beta + l^A {}_B \lambda^B \notag \\
 & = \frac{1}{2L^2} \Pi_a de^a {}_\nu dx^\nu d\xi^A
 - \lambda^A .
\end{align}
\end{proof}

Note that the super Berwald functions \eqref{Gmu} and \eqref{GA}
are nonlinear with respect to $dz^I$.
This result cannot arise if linear connections are assumed from the start
as in \cite{AN}. 
We think this is why they cannot construct the supergravity.
Only nonlinear connection is allowed for the 
Casalbuoni-Brink-Schwarz model.

Without odd variables, that is $\xi^A=0$, the connection $N^\mu {}_\nu$ becomes the usual Riemannian connection.
Therefore, our formulation is a natural extension.

\begin{pr}
The constraint $\mathcal{C}_A=0$ is equivalent to the equation \eqref{mix}
and eventually leads to 
\eqref{constraint}, $\langle \xi |\gamma^a |d\xi\rangle=0$.
\end{pr}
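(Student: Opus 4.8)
The plan is to verify the equivalence by direct substitution, showing that $\mathcal{C}_A=0$ is, up to the nonzero factor $1/(2L)$, the cospinor (index-lowered) form of \eqref{mix}. First I would note that the leading term $2\Pi_a\langle d\xi|\gamma^a|{}_A\rangle$ of $2LM_A$ is exactly the lowered form of the left-hand side $2\Pi_a\gamma^a|d\xi\rangle$ of \eqref{mix}, using $\langle d\xi|\gamma^a|{}_A\rangle=(\gamma^a)_{BA}\,d\xi^B$ and the symmetry $(\gamma^a)_{AB}=(\gamma^a)_{BA}$. It then suffices to show that the remaining terms of $2L\,\mathcal{C}_A$ collapse to $\Pi_b\,\iota_{e_a}\div\theta^b\,\langle\xi|\gamma^a|{}_A\rangle$, i.e. to $(-2L)$ times the lowered right-hand side of \eqref{mix}.

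For the contraction $M_\mu\langle\xi|\gamma^\mu|{}_A\rangle$ I would use the compact form of $M_\mu$ derived in the proof of the Theorem, together with $\gamma^\mu=\gamma^c e^\mu{}_c$, $e^b{}_\mu e^\mu{}_c=\delta^b_c$ and $e^\mu{}_c\iota_{\partial_\mu}=\iota_{e_c}$. Two cancellations carry the computation. First, the projector term $-\tfrac{1}{L^2}\Pi_a\Pi_b e^b{}_\mu de^a{}_\nu dx^\nu$ in $M_\mu$, after contraction, cancels exactly against the $\tfrac{1}{L^2}\Pi_a\Pi_b$ part of the factor $(\eta_{ab}-\tfrac{1}{L^2}\Pi_a\Pi_b)de^a{}_\mu dx^\mu\langle\xi|\gamma^b|{}_A\rangle$ in $2LM_A$. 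Second, the fermion-bilinear piece $-\langle\xi|\gamma_c|d\xi\rangle\iota_{\partial_\mu}\div\theta^c$ of $M_\mu$ supplies precisely the $\langle\xi|\gamma_b|d\xi\rangle$ part of $\Pi_b=\theta_b+\langle\xi|\gamma_b|d\xi\rangle$ in the target expression, so that the bosonic prefactor $\theta_b$ is promoted to the full momentum $\Pi_b$; this is the step that fixes the numerical coefficient appearing in \eqref{mix}.

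What then remains is the purely bosonic identity $\eta_{ab}de^a{}_\mu dx^\mu\langle\xi|\gamma^b|{}_A\rangle-\Gamma_{\mu\alpha\beta}dx^\alpha dx^\beta e^\mu{}_c\langle\xi|\gamma^c|{}_A\rangle=\theta_b\,\iota_{e_a}\div\theta^b\,\langle\xi|\gamma^a|{}_A\rangle$. Reading off the coefficient of the flat index, this is nothing but the Christoffel--zweibein relation already established in the proof of the Theorem (equivalently $\Gamma_{\mu\alpha\beta}dx^\alpha dx^\beta=\eta_{ab}e^b{}_\mu\partial_\nu e^a{}_\rho dx^\nu dx^\rho-\theta_a\iota_{\partial_\mu}\div\theta^a$): contracting it with $e^\mu{}_c$ makes the $\eta\,\partial e$ terms cancel and leaves exactly $\theta_a\iota_{e_c}\div\theta^a$. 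Collecting the surviving terms gives $2L\,\mathcal{C}_A=2\Pi_a\langle d\xi|\gamma^a|{}_A\rangle+\Pi_b\,\iota_{e_a}\div\theta^b\,\langle\xi|\gamma^a|{}_A\rangle$, which is $2L$ times the lowered form of \eqref{mix}. Since the spinor metric $B_{AB}$ is nondegenerate, $\mathcal{C}_A=0$ holds if and only if \eqref{mix} holds.

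Finally, the implication \eqref{mix}$\Rightarrow$\eqref{constraint} is the computation already carried out in Section 2: multiply \eqref{mix} by $\Pi_c\gamma^c/L$, use $\gamma^c\gamma^a=\gamma^{ca}+\eta^{ca}$ and $\Pi_a\Pi_c\eta^{ca}=L^2$ to solve for $|d\xi\rangle$, then contract with the bra $\langle\xi|\gamma^n$ and apply the two-dimensional identity for $\langle\xi|\gamma^n\gamma^c\gamma^a|\xi\rangle$ together with $\langle\xi|\gamma^a|\xi\rangle=0$ to conclude $\langle\xi|\gamma^n|d\xi\rangle=0$. I expect the only real difficulty to be organizational: keeping the two cancellations above in order and correctly tracking the cospinor index placement and the Grassmann signs, rather than anything conceptual.
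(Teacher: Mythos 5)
Your proposal is correct, and it reaches the paper's key identity $2L\,\mathcal{C}_A = 2\Pi_a(\gamma^a)_{AB}\,d\xi^B + \Pi_b\,\iota_{e_a}\div\theta^b\,(\gamma^a)_{AB}\,\xi^B$ by a genuinely different computational route. The paper does not touch the explicit formulas for $M_\mu$ and $M_A$ given in the Theorem; instead it goes back to the defining relation \eqref{M}, computes the six partial derivatives $\bib{L}{\xi^A}$, $\bib{^2L}{d\xi^A \partial x^\mu}$, $\bib{^2L}{d\xi^A \partial \xi^B}$, $\bib{L}{x^\mu}$, $\bib{^2L}{dx^\mu \partial x^\nu}$, $\bib{^2L}{dx^\mu \partial \xi^C}$ from scratch, and substitutes them into $2\mathcal{C}_A = 2M_A - 2M_\mu(\gamma^\mu)_{AB}\xi^B$, where the terms involving $\partial(\Pi_\nu/L)$ and the $\overleftarrow{\partial/\partial\xi}$ derivatives cancel pairwise, leaving the result directly. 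You instead take the Theorem's already-derived expressions (the compact $\Gamma$-form of $2LM_\mu$ and the explicit $2LM_A$) as input and reduce the claim to your two cancellations plus the Christoffel--zweibein identity; I checked that both cancellations and the residual bosonic identity $\eta_{ac}\,de^a{}_\mu dx^\mu - \Gamma_{\mu\alpha\beta}dx^\alpha dx^\beta\, e^\mu{}_c = \theta_b\,\iota_{e_c}\div\theta^b$ do hold, so the argument closes. The trade-off: your route is shorter given the Theorem and makes visible where each structural piece of \eqref{mix} originates (projector cancellation, the $\theta_b \to \Pi_b$ completion by the fermion bilinear, the Christoffel part), whereas the paper's route is self-contained in that it never relies on the Theorem's formulas for $M_I$ — formulas whose derivation the paper only describes as ``calculated straightforward'' — so in effect the paper's proof re-verifies part of that omitted computation while yours presupposes it. Your handling of the final claim, invoking the Section 2 chain \eqref{mix} $\Rightarrow$ \eqref{dxi0} $\Rightarrow$ \eqref{constraint}, matches the paper, whose proof likewise stops once $\mathcal{C}_A=0$ is identified with \eqref{mix}.
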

\begin{proof}
From the definition of $M_I$, we have
\begin{align}
 2 \mathcal{C}_A
 & = 2 M_A - 2 M_\mu (\gamma^\mu)_{AB} \xi^B \notag \\
 & = -\bib{L}{\xi^A} + \bib{L}{d\xi^A \partial x^\mu} dx^\mu
 + \bib{^2L}{d\xi^A \partial \xi^B} d\xi^B \notag \\
 & \hspace{4mm}
 - \left( -\bib{L}{x^\mu} + \bib{^2L}{dx^\mu \partial x^\nu} dx^\nu
 + \bib{^2L}{dx^\mu \partial \xi^C} d\xi^C \right) (\gamma^\mu)_{AB} \xi^B.
\label{2ca}
\end{align}
We put the results
\begin{align}
 \bib{L}{\xi^A}
 & = - \frac{\Pi_\mu}{L} (\gamma^\mu)_{AB} d\xi^B, \\
 \bib{^2L}{d\xi^A \partial x^\mu}
 & = \bib{}{x^\mu} \left( \frac{\Pi_\nu}{L} \right)
 (\gamma^\nu)_{AB} \xi^B
 + \frac{\Pi_\nu}{L} \partial_\mu e_a {}^\nu (\gamma^a)_{AB} \xi^B, \\
 \bib{^2L}{d\xi^A \partial \xi^B}
 & = \frac{\Pi_\mu}{L} (\gamma^\mu)_{AB}
 - \left( \frac{\Pi_\mu}{L} \right) \overleftarrow{\bib{}{\xi^B}} 
 (\gamma^\mu)_{AC} \xi^C, \\
 \bib{L}{x^\mu}
 & = \frac{\Pi_a}{L} \partial_\mu e^a {}_\nu dx^\nu, \\
 \bib{^2L}{dx^\mu \partial x^\nu}
 & = \bib{}{x^\nu} \left( \frac{\Pi_\mu}{L} \right), \\
 \bib{^2L}{dx^\mu \partial \xi^C}
 & = \left( \frac{\Pi_\mu}{L} \right) \overleftarrow{\bib{}{\xi^C}} ,
\end{align}
into equation \eqref{2ca}, and obtain
\begin{align}
 2 \mathcal{C}_A
 & = \frac{\Pi_\mu}{L} (\gamma^\mu)_{AB} d\xi^B
 + \bib{}{x^\mu} \left( \frac{\Pi_\nu}{L} \right) dx^\mu
 (\gamma^\nu)_{AB} \xi^B
 + \frac{\Pi_\nu}{L} \partial_\mu e_a {}^\nu dx^\mu (\gamma^a)_{AB} \xi^B
 \notag \\
 & \hspace{4mm}
 + \frac{\Pi_\mu}{L} (\gamma^\mu)_{AB} d\xi^B
 - \left( \frac{\Pi_\mu}{L} \right) \overleftarrow{\bib{}{\xi^B}} 
 (\gamma^\mu)_{AC} \xi^C d\xi^B \notag \\
 & \hspace{4mm}
 + \frac{\Pi_a}{L} \partial_\mu e^a {}_\nu dx^\nu (\gamma^\mu)_{AB} \xi^B
 - \bib{}{x^\nu} \left( \frac{\Pi_\mu}{L} \right)
 dx^\nu (\gamma^\mu)_{AB} \xi^B 
 - \left( \frac{\Pi_\mu}{L} \right) \overleftarrow{\bib{}{\xi^C}}
 d\xi^C (\gamma^\mu)_{AB} \xi^B \notag \\
 & = \frac{2\Pi_\mu}{L} (\gamma^\mu)_{AB} d\xi^B
 + \frac{\Pi_b}{L} (e^b {}_\nu \partial_\mu e_a {}^\nu
 + e_a {}^\nu \partial_\nu e^b {}_\mu )
 dx^\mu (\gamma^a)_{AB} \xi^B \notag \\
 & = \frac{2\Pi_a}{L} (\gamma^a)_{AB} d\xi^B
 + \frac{\Pi_b}{L} \iota_{e_a} \div \theta^b (\gamma^a)_{AB} \xi^B.
\end{align}
Thus, $\mathcal{C}_A=0$ means the equation \eqref{mix}.
\end{proof}

\section{Auto-parallel equations}

To rewrite the Euler-Lagrange equations into auto-parallel equations, 
\eqref{Ncond} is the key condition.
With these nonlinear super Finsler connections, we have the following result.
\begin{theo}
The Euler-Lagrange equations of the superparticle are expressed as 
the auto-parallel equations
\begin{align}
 0 & = c^* \biggl[ d^2 x^\mu
 + \Gamma^\mu {}_{\alpha\beta} dx^\alpha dx^\beta
 + \frac{2}{L} \Pi^\mu \langle \mathcal{C} | d \xi \rangle
 - \frac{1}{L^2} \Pi_a d e^a {}_\nu dx^\nu 
 \langle \xi | \gamma^\mu | d \xi \rangle
 - g^{\mu\beta} \iota_{\partial_\beta} \div \theta^a
 \langle \xi | \gamma_a | d \xi \rangle \notag \\
 & \hspace{4mm}
 - \frac{\lambda}{L} dx^\mu
 - \langle \xi | \gamma^\mu | \lambda \rangle \biggr],  \label{d2x}\\
 0 & = c^* \biggl[ d^2 \xi^A
 + \frac{1}{L^2} \Pi_a de^a {}_\nu dx^\nu d\xi^A
 - \frac{\lambda}{L} d\xi^A + \lambda^A \biggr], \label{d2xi}
\end{align}
with the constraint
\begin{align}
 c^* \left( \mathcal{C}_A \right) = 0.
\end{align}
Here, $\lambda$ and $\lambda^A$ are arbitrary functions of $(z^I,dz^I)$,
and are second order homogeneous with respect to $dz^I$.
\end{theo}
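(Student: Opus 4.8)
The plan is to read the auto-parallel equation off the vector connection of the Remark and then substitute the super Berwald functions already computed. Applying that connection along $\bm{c}$ to its own velocity gives $\nabla\dot{\bm{c}}=\bigl(d^2z^I+2G^I\bigr)\otimes\overleftarrow{\bib{}{z^I}}$, where $G^I=\tfrac12 N^I{}_J dz^J$ and $d^2z^I$ stands for the pullback $c^*d^2z^I$. Because $L$ is only degree-one homogeneous, the natural, reparametrization-covariant auto-parallel condition is that $\nabla\dot{\bm{c}}$ be proportional to $\dot{\bm{c}}$,
\begin{align}
 c^*\!\left[\,d^2z^I+2G^I-\frac{\lambda}{L}\,dz^I\,\right]=0,
\end{align}
with a single scalar $\lambda$, homogeneous of degree two in $dz$, absorbing the freedom of parametrization. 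Inserting \eqref{Gmu} for $I=\mu$ and \eqref{GA} for $I=A$, and performing the harmless rescaling $\lambda^A\mapsto-2\lambda^A$ of the free odd function (which turns $+2\langle\xi|\gamma^\mu|\lambda\rangle$ into $-\langle\xi|\gamma^\mu|\lambda\rangle$ and $-2\lambda^A$ into $+\lambda^A$), reproduces \eqref{d2x} and \eqref{d2xi} verbatim.

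What remains is to identify this system with the Euler-Lagrange equations. The odd sector is immediate from the preceding Proposition: the constraint $c^*(\mathcal{C}_A)=0$ is nothing but the fermionic equation \eqref{delxi}, and therefore also yields \eqref{constraint}, $\langle\xi|\gamma^a|d\xi\rangle=0$. I would first feed \eqref{constraint} into \eqref{d2x}, which annihilates the three spinor bilinears $\tfrac{2}{L}\Pi^\mu\langle\mathcal{C}|d\xi\rangle$, $-\tfrac{1}{L^2}\Pi_a de^a{}_\nu dx^\nu\langle\xi|\gamma^\mu|d\xi\rangle$ and $-g^{\mu\beta}\iota_{\partial_\beta}\div\theta^a\langle\xi|\gamma_a|d\xi\rangle$ at once.

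The crux is the surviving arbitrary odd term $\langle\xi|\gamma^\mu|\lambda\rangle$. Solving \eqref{d2xi} for $\lambda^A$ and contracting with $\langle\xi|\gamma^\mu|{}_A\rangle$, the pieces proportional to $d\xi^A$ drop by \eqref{constraint}, leaving $\langle\xi|\gamma^\mu|\lambda\rangle=-\langle\xi|\gamma^\mu|d^2\xi\rangle$. I would then differentiate the constraint $\langle\xi|\gamma^\mu|d\xi\rangle=0$: the term $\langle d\xi|\gamma^\mu|d\xi\rangle$ vanishes since $(\gamma^\mu)_{AB}$ is symmetric while $d\xi^A d\xi^B$ is antisymmetric, and $\langle\xi|d\gamma^\mu|d\xi\rangle=\langle\xi|\gamma^a|d\xi\rangle\,de^\mu{}_a$ vanishes again by \eqref{constraint}, so $\langle\xi|\gamma^\mu|d^2\xi\rangle=0$. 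Hence $\langle\xi|\gamma^\mu|\lambda\rangle=0$ on the constraint surface and \eqref{d2x} collapses to the pure geodesic form $d^2x^\mu+\Gamma^\mu{}_{\alpha\beta}dx^\alpha dx^\beta=\tfrac{\lambda}{L}dx^\mu$.

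Finally I would match this with \eqref{eom}. Expanding $d\!\left(g_{\mu\nu}dx^\nu/L\right)$ there and lowering the index recasts \eqref{eom} as $g_{\mu\nu}\bigl(d^2x^\nu+\Gamma^\nu{}_{\alpha\beta}dx^\alpha dx^\beta\bigr)=\tfrac{dL}{L}\,g_{\mu\nu}dx^\nu$, the very same geodesic equation with $\lambda=dL$; equivalently, contracting the collapsed \eqref{d2x} with $g_{\mu\nu}dx^\nu$ and using $\Pi^\mu=dx^\mu$ on the constraint surface (so that $g_{\mu\nu}dx^\mu dx^\nu=L^2$ and the left side reduces to $L\,dL$) fixes $\lambda=dL$. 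The hard part will be paragraph three: keeping the graded signs straight in the differentiated constraint and in the $(dt)^2$ pullbacks, and then arguing that $(\lambda,\lambda^A)$ is pure gauge---that the parametrization scalar $\lambda$ and the fermionic acceleration $\lambda^A$ carry no content beyond \eqref{eom} together with $c^*(\mathcal{C}_A)=0$, so that the auto-parallel system and the Euler-Lagrange system possess identical solution curves.
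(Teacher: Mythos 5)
Your proposal has a genuine gap: you never derive the auto-parallel form from the Euler--Lagrange equations, which is the entire content of the theorem and of the paper's proof. The paper's argument is the computation
\begin{align}
 \frac{\partial L}{\partial z^I} - d\left(\frac{\partial L}{\partial dz^I}\right)
 = -L_{IJ}\left(d^2 z^J + 2 G^J\right),
\end{align}
which uses metric preservation \eqref{Ndef} twice, the symmetry condition \eqref{Ncond}, and the homogeneity of $N^I{}_J$ (this is why the paper announces \eqref{Ncond} as ``the key condition''). The arbitrary functions then appear for a precise reason: the Hessian $L_{IJ}$ is degenerate, with $L_{IJ}\,dz^J/L = 0$ and $L_{IJ}\,l^J{}_A = 0$, so the Euler--Lagrange equations constrain $d^2z^I + 2G^I$ only up to a kernel element $\lambda\, dz^I/L + l^I{}_A \lambda^A$. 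You replace this derivation by an ansatz---that the auto-parallel condition is $\nabla \dot{\bm{c}} \propto \dot{\bm{c}}$ with a single scalar $\lambda$---and your ansatz reproduces the stated equations only by accident: the odd freedom you need is already sitting inside $G^\mu$ and $G^A$ as the connection ambiguity $\lambda^A$ of the previous theorem. In the theorem, however, $\lambda^A$ has an independent origin, namely the odd part of the kernel of $L_{IJ}$. Had you fixed the connection (say $\lambda^A = 0$ in \eqref{Gmu} and \eqref{GA}, a perfectly legitimate choice), your proportionality ansatz would yield a strictly smaller system than the theorem's, and it would no longer be equivalent to the Euler--Lagrange equations.

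The second problem is that the equivalence of the two systems---the statement being proved---is exactly what you postpone as ``the hard part.'' Your sketch of one direction, (auto-parallel $+$ constraint) $\Rightarrow$ \eqref{eom}, is sound as far as it goes: the computations $\langle \xi|\gamma^\mu|d^2\xi\rangle = 0$ from differentiating the constraint, and $\lambda = dL$ from contracting with $g_{\mu\nu}dx^\nu$, are correct. But the converse---that every Euler--Lagrange solution satisfies \eqref{d2x} and \eqref{d2xi} for some admissible $\lambda, \lambda^A$---is never argued, and your identification of the constraint with \eqref{delxi} is imprecise: by the Proposition, $\mathcal{C}_A = 0$ is equivalent to \eqref{mix}, which is \eqref{delxi} already rewritten modulo the bosonic equation \eqref{delx}, so the logical bookkeeping of which equation implies which has to be tracked. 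A completed version of your constraint-surface argument could serve as an alternative, model-specific proof; but as written the central equivalence is assumed rather than established, whereas in the paper it falls out directly from the connection axioms together with the degeneracy of $L_{IJ}$.
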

\begin{proof}
We start with the Euler-Lagrange equation. Making use of 
condition \eqref{Ncond}, we have
\begin{align}
 0
 & = \bib{L}{z^I} - d \left( \bib{L}{dz^I} \right) \notag \\
 & = \bib{L}{z^I} - \bib{^2 L}{dz^I \partial z^J} dz^J
 - \bib{^2 L}{dz^I \partial dz^J} d^2z^J \notag \\
 & = \bib{L}{z^I} - (-1)^{|I| |J|} \left( \bib{L}{z^J} \right) \overleftarrow{\bib{}{dz^I}} dz^J
 - L_{IJ} d^2z^J \notag \\
 & = \bib{L}{dz^J} N^J {}_I - (-1)^{|I| |J|}
 \left( \bib{L}{dz^K} N^K {}_J \right) \overleftarrow{\bib{}{dz^I}} dz^J
 - L_{IJ} d^2z^J \notag \\
 & = \bib{L}{dz^J} N^J {}_I - \left\{ (-1)^{|I| |J|} \bib{L}{dz^K} \bib{N^K {}_J}{dz^I}
 + (-1)^{|I| |K|} L_{KI} N^K {}_J \right\} dz^J
 - L_{IJ} d^2z^J \notag \\
 & = - L_{IJ} ( d^2z^J + 2 G^J ).
\end{align}
Since
\begin{align}
 L_{IJ} \frac{dz^J}{L} = 0,
 \hspace{10mm}
 L_{IJ} l^J {}_A = 0,
\end{align}
we can expand it as
\begin{align}
 d^2z^I + 2 G^I = \lambda \frac{dz^J}{L} + l^I {}_A \lambda^A, \label{zg}
\end{align}
with arbitrary functions $\lambda$ and $\lambda^A$.
Substituting \eqref{Gmu} and \eqref{GA} to \eqref{zg}, and redefining
the arbitrary function $\lambda^A$ 
using homogeneity conditions, we have the desired results.
\end{proof}
The arbitrary functions $\lambda$ and $\lambda^A$ have different origins:
$\lambda$ emerges from the reparametrization invariance and 
 is determined when the time parameter is fixed,
and $\lambda^A$, or $| \lambda \rangle$ in the bracket notation,
is related to the gauge symmetries (constraint $\mathcal{C}_A=0$).
For this superparticle model, it is determined by the consistency 
with the equation \eqref{dxi}.
\begin{cor}
Suppose the constraint $\mathcal{C}_A=0$ is satisfied.
Then we have
\begin{align}
 d^2 x^\mu
 & = - \Gamma^\mu {}_{\alpha\beta} dx^\alpha dx^\beta
 + \frac{\lambda}{L} dx^\mu \\
 d^2 \xi^A
 & = \frac{\lambda}{L} d\xi^A - \frac12 
 \left\{ d (\hat{\omega}_c)^A {}_B \theta^c \xi^B
 - (\hat{\omega}_c)^A {}_B \omega^c {}_{ab} \theta^a \theta^b \xi^B
 + (\hat{\omega})^A {}_B d\xi^B \right\}.
\end{align}
\end{cor}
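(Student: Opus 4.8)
The plan is to obtain both identities from the auto-parallel equations \eqref{d2x} and \eqref{d2xi}, feeding in the constraint $\mathcal{C}_A=0$ together with the reduced spinor equation \eqref{dxi}. First I would record the consequences of the constraint. By the Proposition relating $\mathcal{C}_A=0$ to \eqref{mix} and \eqref{constraint}, we have $\langle\xi|\gamma^a|d\xi\rangle=0$; contracting with the zweibein gives $\langle\xi|\gamma^\mu|d\xi\rangle=\langle\xi|\gamma_a|d\xi\rangle=0$, and $\langle\mathcal{C}|d\xi\rangle=\mathcal{C}_A\,d\xi^A=0$ trivially. These three identities annihilate the three middle terms of \eqref{d2x} at once.

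The only term surviving in \eqref{d2x} besides $d^2x^\mu$, $\Gamma^\mu{}_{\alpha\beta}dx^\alpha dx^\beta$ and $\tfrac{\lambda}{L}dx^\mu$ is $\langle\xi|\gamma^\mu|\lambda\rangle$, and I would kill it using the structure of $\lambda^A$ together with a two-dimensional $\gamma$-identity. Differentiating \eqref{dxi} once exhibits $d^2\xi^A$, hence $\lambda^A$ as read from \eqref{d2xi}, as a linear combination of terms of the schematic types $\gamma^{ab}\xi$, $\gamma^{ab}d\xi$ and $d\xi$ (with even coefficients assembled from $\theta$, $\omega$ and $K$). Since in two dimensions $\gamma^{ab}=\varepsilon^{ab}\gamma^{01}$ and $\gamma^\mu\gamma^{01}$ is once more a linear combination of the $\gamma^\nu$, each type is sent to zero by $\langle\xi|\gamma^\mu|\cdot\rangle$: the $\xi$-pieces through $\langle\xi|\gamma^\nu|\xi\rangle=0$, and the $d\xi$-pieces through $\langle\xi|\gamma^\nu|d\xi\rangle=0$. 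Thus $\langle\xi|\gamma^\mu|\lambda\rangle=0$ and \eqref{d2x} collapses to the stated $d^2x^\mu=-\Gamma^\mu{}_{\alpha\beta}dx^\alpha dx^\beta+\tfrac{\lambda}{L}dx^\mu$.

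For the spinor identity I would differentiate \eqref{dxi} directly. Writing $d\xi^A=-\tfrac12(\hat{\omega})^A{}_B\xi^B$ with the even matrix $(\hat{\omega})^A{}_B=(\hat{\omega}_c)^A{}_B\theta^c$, the graded Leibniz rule produces $d^2\xi^A=-\tfrac12\{d(\hat{\omega}_c)^A{}_B\theta^c\xi^B+(\hat{\omega}_c)^A{}_B\,d\theta^c\,\xi^B+(\hat{\omega})^A{}_B d\xi^B\}$. The crux is the evaluation of $d\theta^c=\partial_\nu e^c{}_\mu\,dx^\nu dx^\mu+e^c{}_\mu\,d^2x^\mu$: I would substitute the $d^2x^\mu$ just established, and use the vielbein postulate $\partial_\alpha e^c{}_\beta-\Gamma^\mu{}_{\alpha\beta}e^c{}_\mu=-\omega^c{}_{b\alpha}e^b{}_\beta$, which turns the $\Gamma$-part into $-\omega^c{}_{ab}\theta^a\theta^b$, while the $\tfrac{\lambda}{L}dx^\mu$ part contributes $-\tfrac12(\hat{\omega}_c)^A{}_B\tfrac{\lambda}{L}\theta^c\xi^B=\tfrac{\lambda}{L}d\xi^A$ upon re-using \eqref{dxi}, the advertised reparametrization term. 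Collecting the pieces reproduces the stated $d^2\xi^A$, and matching against \eqref{d2xi} simultaneously fixes $\lambda^A$.

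I expect the main obstacle to be this $d\theta^c$ reduction, purely as bookkeeping: one must respect the symmetric commutation $dx^\mu dx^\nu=dx^\nu dx^\mu$---so that only symmetrized frame derivatives survive, which is precisely why the spin-connection term appears with the index pattern $\omega^c{}_{ab}\theta^a\theta^b$---and keep the Grassmann signs correct when applying Leibniz to $(\hat{\omega})^A{}_B\xi^B$. The $\gamma$-matrix cancellation of $\langle\xi|\gamma^\mu|\lambda\rangle$ is the only genuinely algebraic input, and it is special to two dimensions.
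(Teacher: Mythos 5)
Your proof is correct and follows essentially the same route as the paper's: impose the constraint to reduce \eqref{d2x} and \eqref{d2xi}, differentiate \eqref{dxi}, reduce $d\theta^c$ via the vielbein postulate to $-\omega^c{}_{ab}\theta^a\theta^b$ plus the $\frac{\lambda}{L}\theta^c$ piece, and match against \eqref{d2xi} to fix $\lambda^A$. The only difference is one of ordering and explicitness: the paper keeps $\langle\xi|\gamma^\mu|\lambda\rangle$ alive in \eqref{d2x2} and in $d\theta^c$, solves for $|\lambda\rangle$ by comparison with \eqref{d2xi2}, and then substitutes back with the vanishing of $\langle\xi|\gamma^\mu|\lambda\rangle$ left implicit in ``the result follows,'' whereas you establish $\langle\xi|\gamma^\mu|\lambda\rangle=0$ up front by the structural two-dimensional argument (every term of $\lambda^A$ is of type $\gamma^{ab}\xi$, $\gamma^{ab}d\xi$ or $d\xi$, and $\gamma^\mu\gamma^{01}$ is again a combination of the $\gamma^\nu$, so $\langle\xi|\gamma^\nu|\xi\rangle=\langle\xi|\gamma^\nu|d\xi\rangle=0$ kills everything) --- precisely the step the paper tacitly uses both there and in dropping the $\langle\xi|\gamma^c|\lambda\rangle\,\hat{\omega}_c|\xi\rangle$ term from its expression for $|\lambda\rangle$, but never spells out.
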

\begin{proof}
When $\mathcal{C}_A=0$, the equations \eqref{d2x} and \eqref{d2xi} become
\begin{align}
 d^2 x^\mu
 & = - \Gamma^\mu {}_{\alpha\beta} dx^\alpha dx^\beta
 + \frac{\lambda}{L} dx^\mu
 + \langle \xi | \gamma^\mu | \lambda \rangle, \label{d2x2} \\
 d^2 \xi^A
 & = - \frac{1}{L^2} \eta_{ab} \theta^b de^a {}_\nu dx^\nu d\xi^A
 + \frac{\lambda}{L} d\xi^A - \lambda^A. \label{d2xi2}
\end{align}
To evaluate the parameter $ | \lambda \rangle$,
take the total derivative of \eqref{dxi},
\begin{align}
 |d^2 \xi \rangle
 = d \left\{ - \frac12 \hat{\omega}_c \theta^c | \xi \rangle \right\}
 = - \frac12 \left\{ d\hat{\omega}_c \theta^c | \xi \rangle
 + \hat{\omega}_c d \theta^c | \xi \rangle
 + \hat{\omega} | d \xi \rangle \right\}.
\end{align}
For the part $d \theta^c$, we have
\begin{align}
 d \theta^c
 & = d (e^c {}_\mu dx^\mu) \notag \\
 & = de^c {}_\mu dx^\mu + e^c {}_\mu d^2x^\mu \notag \\
 & = de^c {}_\mu dx^\mu
 - e^c {}_\mu \Gamma^\mu {}_{\alpha\beta} dx^\alpha dx^\beta
 + \frac{\lambda}{L} \theta^c
 + \langle \xi | \gamma^c | \lambda \rangle \notag \\
 & = - \omega^c {}_{ab} \theta^a \theta^b
 + \frac{\lambda}{L} \theta^c
 + \langle \xi | \gamma^c | \lambda \rangle, \label{dth}
\end{align}
where we substituted \eqref{d2x2} into $d^2 x^\mu$ in the second line.
Then we obtain
\begin{align}
 |d^2 \xi \rangle
 = - \frac12 \biggl\{ d\hat{\omega}_c \theta^c | \xi \rangle
 - \hat{\omega}_c \omega^c {}_{ab} \theta^a \theta^b | \xi \rangle
 + \frac{\lambda}{L} \hat{\omega}_c \theta^c | \xi \rangle 
 + \langle \xi | \gamma^c | \lambda \rangle \hat{\omega}_c | \xi \rangle 
 + \hat{\omega} | d \xi \rangle \biggr\}.
\end{align}
The third term becomes $\displaystyle \frac{\lambda}{L} | d\xi \rangle$
due to the equation \eqref{dxi}.
Comparing this and the equation \eqref{d2xi2}, we obtain
\begin{align}
 | \lambda \rangle
 = - \frac{1}{L^2} \eta_{ab} \theta^b de^a {}_\mu dx^\mu | d\xi \rangle
 + \frac12 d \hat{\omega}_c \theta^c | \xi \rangle
 - \frac12 \hat{\omega}_c \omega^c {}_{ab} \theta^a \theta^b |\xi\rangle
 + \frac12 \hat{\omega} | d\xi \rangle.
\end{align}
Put this $ | \lambda \rangle$ back into \eqref{d2x2} and \eqref{d2xi2}, and 
the result follows.
\end{proof}

\begin{rem}
With \eqref{dxi} and \eqref{dth}, the equation \eqref{d2xi2}
becomes 
\begin{align}
 d^2 \xi^A = -\frac12 d \hat{\omega}^A {}_B \xi^B
 + \frac14 \hat{\omega}^A {}_B \omega^B {}_C \xi^C,
\end{align}
and this is equivalent to 
\begin{align}
 D(D |\xi\rangle)=0,
 \hspace{10mm}
 D |\xi\rangle
 := |d\xi\rangle
 + \frac{1}{2} \hat{\omega} |\xi\rangle.
\end{align}
\end{rem}

By the terminology of constrained systems, we can say that
$\mathcal{C}_A=0$ is a second-class constraint,
since the Lagrange multiplier $\lambda^A$ is determined,
as suggested in \cite{Cas1, BS} for the flat case.

\section{Discussion}

In this paper, we have newly defined a nonlinear connection on a super Finsler manifold and calculate it in the case of the Casalbuoni-Brink-Schwarz metric.
We have expressed how the equations of motion
of the superparticle are rewritten in the form of the auto-parallel equations.
Our explicit calculation displays the nonlinear connection truly 
plays a critical role in this process,
though the last corollary indicates that the connection would become linear 
after exposed the constraint $\mathcal{C}_A=0$.
This setup is fundamentally different from the one in Arnowitt-Nath \cite{AN}
where only a linear connection is used.
Considering the fact that our procedure is similar to Einstein's approach 
to a relativistic particle in his theory of general relativity,
we are on the right track to construct a theory of supergravity
form a superparticle.
The Casalbuoni-Brink-Schwarz model leads
a theory of gravity for a matter with internal degrees of freedom.
To prove it, we are now working on the derivation of the induced connection,
Finsler curvature, and torsion on the constraints.
For supergravity, the system 
with an additional Rarita-Schwinger field is underway as well.
We also note that this method is applicable to any higher dimensional systems, which is remarkable because an explicit calculation of Finsler connection is difficult even in a 2-dimensional case.

\begin{acknowledgements}

We thank Prof. L. Kozma, Prof. M. Morikawa and Prof. A. Sugamoto for valuable discussions.

\end{acknowledgements}




\begin{thebibliography}{10}


\bibitem{AN}
R. Arnowitt and P. Nath,
Riemannian geometry in spaces with Grassman coordinates,
Gen. Rel. Grav., 7, 89 (1976)

\bibitem{Freu}
P. G. O. Freund,
Introduction to supersymmetry,
Cambridge University Press (1988)

\bibitem{Cas1}
R. Casalbuoni,
Relatively and supersymmetries,
Phys. Lett., 62B, 49 (1976)

\bibitem{BS}
L. Brink and J. H. Schwarz,
Quantum superspace,
Phys. Lett., 100B, 310 (1981)

\bibitem{Witt}
E. Witten,
Twistor-like transform in ten dimensions,
Nucl. Phys., B266, 245 (1985)

\bibitem{DeW}
B. DeWitt,
Supermanifolds, Second Edition,
Cambridge University press (2000)

\bibitem{Rog}
A. Rogers,
Supermanifolds: Theory and Applications,
World Scientiﬁc (2007)

\bibitem{KO} 
L. Kozma and T. Ootsuka, 
Finsler connection for general Lagrangian systems,
Rep. Math. Phys., 78, 157 (2016)

\bibitem{Mats}
M. Matsumoto,
Foundations of Finsler Geometry and Special Finsler Spaces,
Shigaken: Kaiseisha Press (1986)

\bibitem{Mir}
R. Miron,
The Geometry of Higher-Order Lagrange Spaces,
Kluwer Academic Publishers (1997)

\bibitem{BCS}
D. Bao, S.S. Chern and Z. Shen,
An Introduction to Riemann-Finsler Geometry,
New York: Springer (2000)

\bibitem{Ant}
P.L. Antonelli,
Handbook of Finsler Geometry,
New York: Springer (2003)

\bibitem{BM}
I. Bucataru and R. Miron,
Finsler-Lagrange Geometry: Applications to Dynamical Systems,
Bucure\c{s}ti: Editura Academiei Rom\^ane (2007)

\bibitem{IT}
R.S. Ingarden and L. Tam{\'a}ssy, 
The point Finsler spaces and their physical applications 
in electron optics and thermodynamics,
Mathl. Comput. Modelling, 20, 93 (1994)

\bibitem{KT}
L. Kozma and L. Tam{\'a}ssy, 
Finsler geometry without line elements faced to applications,
Rep. Math. Phys., 51, 233 (2003)

\bibitem{Bej}
A. Bejancu,
Finsler Geometry and Applications,
Ellis Horwood, Chichester, England (1990)

\bibitem{VV}
S. I. Vacaru and N. A. Vicol,
Generalized Finsler superspaces,
Proceedings of The Conference of Applied Differential Geometry
- General Relativity and The Workshop on Global Analysis,
Differential Geometry and Lie Algebras, 2002, 197,
Balkan Society of Geometers, Geometry Balkan Press (2004)

\bibitem{Chev}
C. Chevalley,
The Algebraic Theory of Spinors and Clifford Algebras,
Springer (1996)

\bibitem{Suz}
Y. Suzuki,
On the Elementary Theory of Spinor,
Journal of the College of Arts and Sciences, Chiba University, 5, 27 (1967)





\end{thebibliography}
\end{document}